\newcommand{\diag}{\ensuremath{\mathrm{diag}}}
\newcommand{\mb}[1]{\ensuremath{\mathbb{#1}}}
\newcommand{\R}{\mb{R}}
\newcommand{\al}{\alpha}
\newcommand{\be}{\beta}
\newcommand{\ga}{\gamma}
\newcommand{\Ga}{\Gamma}
\newcommand{\eps}{\varepsilon}
\newcommand{\vphi}{\varphi}
\newcommand{\la}{\lambda}
\newcommand{\sig}{\sigma}
\newcommand{\D}{\ensuremath{\mathcal{D}}}
\newcommand{\G}{\ensuremath{\mathcal{G}}}
\renewcommand{\S}{\mathscr{S}}
\newcommand{\loc}{\ensuremath{\text{loc}}}
\renewcommand{\d}{\ensuremath{\partial}}
\newcommand{\diff}[1]{\frac{d}{d#1}}
\newcommand{\isom}{\cong}
\newcommand{\col}{\colon}
\newcommand{\norm}[2]{{\| #1 \|}_{#2}}
\newcommand{\Norm}[1]{\norm{#1}{}}
\newcommand{\FT}[1]{\widehat{#1}}
\newcommand{\beq}{\begin{equation}}
\newcommand{\eeq}{\end{equation}}
\renewcommand{\Re}{\ensuremath{\mathop{\mathrm{Re}}}}
\renewcommand{\Im}{\ensuremath{\mathop{\mathrm{Im}}}}
\theoremstyle{definition}
\newtheorem{theorem}{Theorem}[section]
\newtheorem{lemma}[theorem]{Lemma}
\newtheorem{proposition}[theorem]{Proposition}
\newtheorem{definition}[theorem]{Definition}
\newtheorem{remark}[theorem]{Remark}
\newtheorem{example}[theorem]{Example}
\newcommand{\co}{\ensuremath{{[0,1]}}}
\newcommand{\tauco}{\ensuremath{\tau_{\text{c.o.}}}}
\begin{document}

\title{Conical spacetimes and global hyperbolicity}

\author{G\"unther H\"ormann\\
Fakult\"at f\"ur Mathematik,
Universit\"at Wien, Austria}

\date{\today}

\begin{abstract}  Vickers and Wilson (\cite{VW:00}) have shown global hyperbolicity of the conical spacetime in the sense of well-posedness of the initial value problem for the wave equation in generalized functions. We add the aspect of metric splitting and preliminary thoughts on Cauchy hypersurfaces and causal curves. 
\end{abstract}

\maketitle

\pagestyle{plain}

\section{Introduction and set-up of the cosmic string model}

Our basic references are \cite{Wald:84} for general relativity, \cite{ONeill:83} for smooth differential and Lorentz\-ian geometry, and \cite{GKOS:01} for generalized functions and non-smooth differential geometry. A \emph{spacetime} $(M,g)$ shall consist of a connected smooth manifold $M$ with a symmetric covariant $2$-tensor field $g$ that is measurable (with respect to some [hence any] Lebesgue measure on $M$ in the sense of \cite[Section 16.22]{Die:72}), locally bounded (with respect to some [hence any] Riemannian metric on $M$), and is almost everywhere non-degenerate and of index $1$ (dimension of a maximal subspace in the tangent space where the metric is negative definite). Furthermore, $(M,g)$ shall be \emph{time-oriented} by the existence and choice of a nowhere vanishing continuous vector field that is timelike almost everywhere. 

We consider a simple model of a \emph{cosmic string} in terms of the so-called \emph{conical spacetime} (cf.\ \cite[Section 3.4]{GP:09} and \cite[Section 7]{SV:06}), the metric of which with parameter $0 < \al < 1$ can be specified in cylindrical coordinates $(t,r,\vphi,z) \in \R \times\, ]0,\infty[\, \times\,]-\pi,\pi[\, \times \R =: W$ in the form
$$
    l_\al(t,r,\vphi,z) := - dt^2 + dr^2 + \al^2 r^2 d\vphi^2 + dz^2.
$$
It arises from a construction with the universal covering space upon removing from Minkowski space the fixed point set of rotation by the angle $2 \pi \al$ in the $(x,y)$-plane.  In  \cite[Section 2]{Vickers:87} James Vickers discussed the more general geometric situation producing the class of cosmic string models based on methods for `elementary' quasi-regular singularities by Ellis and Schmidt in \cite[Appendix A]{ES:77}. 

The above local form of the metric clearly possesses a continuous extension $\overline{l_\al}$ to $r = 0$ (and to arbitrary values of $\vphi$), which is degenerate at $r = 0$. The smooth transformation $\Phi \col (t,r,\vphi,z) \mapsto (t,x,y,z) := (t,r \cos \vphi, r \sin \vphi, z)$ is a diffeomorphism of $W$ onto $V := \Phi(W) = \{ (t,x,y,z) \in \R^4 \col (x,y) \neq (0,0) \}$, 
hence we obtain the corresponding Cartesian representation of the Lorentz metric $l_\al$ on the open dense region $V \subset \R^4$ as push-forward $\Phi_* l_\al$  with matrix  $$G_\al(t,x,y,z) =  
    \left( \begin{smallmatrix} -1 & 0 & 0 & 0\\ 
      0 & \frac{x^2 + \al^2 y^2}{x^2 + y^2} &
      (1 - \al^2) \frac{x\, y}{x^2 + y^2} & 0\\
      0 & (1 - \al^2) \frac{x \, y}{x^2 + y^2} &
      \frac{\al^2 x^2 + y^2}{x^2 + y^2} & 0\\
      0 & 0 & 0 & 1
   \end{smallmatrix} \right)
$$
that has smooth and bounded components on $V$, which define unique classes in $L^\infty(\R^4)$, since $\R^4 \setminus V$ has Lebesgue measure $0$. Hence this $L^\infty$-extension of $G_\al$ to $\R^4$ is the matrix of a metric $g_\al$. The ``pull-back'' $\Phi^* g_\al$, where   $\Phi$ is considered as map $\R \times [0,\infty[\, \times\,[-\pi,\pi]\, \times \R \to \R^4$, gives  $\overline{l_\al}$ in the sense that $D\Phi(t,r,\vphi,z)^t \cdot G_\al(\Phi(t,r,\vphi,z)) \cdot D\Phi(t,r,\vphi,z) = \diag(-1,1,\al^2 r^2,1)$ holds almost everywhere, namely on $W$.

Thus, in the Cartesian description we have $M = \R^4$ and the Lorentz metric $g_\al$  is given in global coordinates $(t,x,y,z)$ by
\begin{multline*}
   g_\al(t,x,y,z) := - dt^2 +  
   \left(\frac{1 + \al^2}{2} + \frac{1 - \al^2}{2} f_1(x,y)\right) dx^2  +
   (1 - \al^2) f_2(x,y)\, dx\, dy \\
    + \left(\frac{1 + \al^2}{2} - \frac{1 - \al^2}{2} f_1(x,y)\right) dy^2  + dz^2
\end{multline*}
with $f_1, f_2 \in L^\infty(\R^2)$ represented by following functions on $\R^2\setminus\{(0,0)\}$: 
$$  f_1(x,y) := \frac{x^2 - y^2}{x^2 + y^2} = \Re \frac{(x+iy)^2}{x^2 + y^2}  \quad\text{and}\quad 
    f_2(x,y) := \frac{2 x y}{x^2 + y^2} = \Im \frac{(x+iy)^2}{x^2 + y^2}.
$$
The eigenvalues of the matrix representing $g_\al$ at any point $(t,x,y,z)$ with $(x,y) \neq (0,0)$ are $-1$, $1$ (double), and $\al^2$, thus showing non-degeneracy in the tangent spaces over these points. As time-orientation of the conical spacetime we declare the (constant) timelike vector field $(1,0,0,0)$ to be future-directed.

We may rewrite $g_\al$ by splitting off the non-smooth part in the form
\begin{multline*}
   g_\al(t,x,y,z) := \underbrace{- dt^2 +  
   \frac{1 + \al^2}{2}\, dx^2 + \frac{1 + \al^2}{2} \, dy^2
    + dz^2}_{\eta_\al(t,x,y,z)}\\
   + \frac{1 - \al^2}{2}\,  \underbrace{\left( f_1(x,y)\, dx^2 - 
       f_1(x,y)\, dy^2 + 2 f_2(x,y)\, dx\, dy \right)}_{h(x,y)}. 
\end{multline*}
Note that $\lim_{\al \to 1} g_\al = \lim_{\al \to 1} \eta_\al$ is the Minkowski metric and $h$ can be considered a discontinuous $L^\infty$ Lorentz metric on $\R^2$, since its corresponding matrix in the $2$-dimensional tangent spaces at each $(x,y) \neq (0,0)$ has eigenvalues $1$ and $-1$. 

James Vickers has studied the conical spacetime and also identified it as the `prototype' of a more general class of spacetimes with $2$-dimensional quasi-regular singularities already in his doctoral thesis and later in a series of research articles (cf.\ \cites{Vickers:87,Vickers:90} and references therein). Geometric and distributional methods have been employed to argue that these models satisfy the Einstein equations with certain components of the energy-momentum tensor being proportional to a delta distribution in $(x,y)$-subspace, or, in fact, the Lorentz metric producing distributional curvature components of such type. Clarke, Vickers, and Wilson then in \cite{CVW:96} for the first time set-up the model of the conical spacetime within the framework of Colombeau's generalized functions and confirmed Vickers' earlier result by investigating the distributional shadow of the generalized curvature tensor.

In \cite[Section 2]{SV:06} Steinbauer and Vickers discuss an analysis going back to Geroch and Traschen which strives for `minimal sufficient conditions' on the metric components to guarantee a well-defined distributional curvature (employing multiplication in spaces of measurable functions as nonlinear operations). In this setting, the notion of a so-called \emph{gt-regular metric} is introduced, requiring the components of the metric $g$ with respect to coordinates to be in $L^\infty_\loc \cap H^1_\loc$. As noted in \cite{SV:06}  the conical spacetime metric is not gt-regular, but it narrowly it fails to be so.

\begin{lemma} The first-order distributional derivatives of $f_1$ and $f_2$ belong to $L^1_\loc \setminus L^2_\loc$, thus, $f_1, f_2 \in (L^\infty_\loc \cap W^{1,1}_\loc)\setminus  H^1_\loc$.
\end{lemma}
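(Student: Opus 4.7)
The plan is to compute classical derivatives of $f_1$ and $f_2$ on $\R^2\setminus\{(0,0)\}$, identify these as the distributional derivatives by an exhaustion argument, and then estimate their local $L^p$ norms in polar coordinates.

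First I would carry out the four partial derivatives of $f_1,f_2$ on $\R^2\setminus\{(0,0)\}$ using the quotient rule. Each ends up of the form $P(x,y)/(x^2+y^2)^2$ with $P$ a homogeneous polynomial of degree $3$; explicitly
$$
   \d_x f_1 = \frac{4 x y^2}{(x^2+y^2)^2}, \quad
   \d_y f_1 = \frac{-4 x^2 y}{(x^2+y^2)^2}, \quad
   \d_x f_2 = \frac{2 y (y^2-x^2)}{(x^2+y^2)^2}, \quad
   \d_y f_2 = \frac{2 x (x^2-y^2)}{(x^2+y^2)^2}.
$$
Each such function is positively homogeneous of degree $-1$, and in polar coordinates has the shape $g(\vphi)/r$ with $g$ bounded and smooth.

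Next I would check that these classical derivatives really are the distributional ones, i.e.\ that no delta-supported pieces appear at the origin. The natural route is: for a test function $\chi \in \Cinfc(\R^2)$, write the pairing $\langle \d_j f_i, \chi\rangle = - \int f_i \, \d_j \chi$ and split the integration against $\chi$ over $\R^2 \setminus B_\eps(0)$ and $B_\eps(0)$. On the exterior region $f_i$ is smooth, so ordinary integration by parts applies and produces a circle boundary term over $\d B_\eps$ of the form $\int_{\d B_\eps} f_i\, \chi\, \nu_j\, d\sigma$; this is bounded by $\linf{f_i}\cdot\linf{\chi}\cdot 2\pi\eps$, hence vanishes as $\eps \to 0$. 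The interior contribution $\int_{B_\eps}f_i\, \d_j\chi$ is $O(\eps^2)$, and the classical-derivative integral over $B_\eps$ is $O(\eps)$ since its $L^1$ density is $O(1/r)\cdot r$ there. Taking $\eps \to 0$ identifies the distributional derivative with the classical one a.e.

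Finally, with the representation $\d_j f_i = g_{ij}(\vphi)/r$ in hand, local integrability reduces to a one-variable question: on any bounded neighborhood of the origin,
$$
   \int_0^{R}\!\!\int_0^{2\pi} \frac{|g_{ij}(\vphi)|}{r}\, r\, d\vphi\, dr < \infty,
   \qquad \text{but}\qquad
   \int_0^{R}\!\!\int_0^{2\pi} \frac{|g_{ij}(\vphi)|^2}{r^2}\, r\, d\vphi\, dr = \infty
$$
(since for generic $\vphi$ the angular factor $g_{ij}(\vphi)$ is non-zero, so the $\int_0^R dr/r$ diverges). Away from the origin the derivatives are smooth and bounded on every compact set, so only the origin matters. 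This gives $\d_j f_i \in L^1_\loc\setminus L^2_\loc$, and together with the obvious $f_i \in L^\infty \subset L^\infty_\loc$ it yields $f_1,f_2 \in (L^\infty_\loc \cap W^{1,1}_\loc) \setminus H^1_\loc$. The only mildly delicate point in the whole argument is the boundary-term estimate certifying absence of a concentrated singular part at the origin; everything else is a direct computation.
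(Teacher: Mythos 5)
Your argument is correct, and it differs from the paper's at exactly one point: how you certify that the distributional derivatives carry no singular part concentrated at the origin. You do this by hand, cutting out $B_\eps(0)$, integrating by parts on the exterior, and showing that the circle boundary term is $O(\eps)$ because $f_1,f_2$ are bounded; this is an elementary, self-contained computation. The paper instead never computes the derivatives explicitly: it only records that $f_1,f_2$ are homogeneous of degree $0$, so that each $\d_j f_k$ is a homogeneous distribution of degree $-1$ on $\R^2$, and then invokes the uniqueness of the extension of a homogeneous distribution from $\R^n\setminus\{0\}$ to $\R^n$ when the degree is not an integer $\leq -n$ (H\"ormander, Theorem 3.2.3); since the classical pointwise derivative already provides one such homogeneous extension in $L^1_{\text{loc}}$, it must be \emph{the} distributional derivative. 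The two routes buy different things: yours is more explicit and requires no machinery beyond integration by parts (and would survive even if the homogeneity bookkeeping were disturbed), while the paper's is shorter, avoids the quotient-rule computations entirely, and makes transparent why the borderline is exactly $L^1_{\text{loc}}$ versus $L^2_{\text{loc}}$ (degree $-1$ versus degree $-2$ in dimension $2$). Your final step, the polar-coordinate divergence of $\int_0^R dr/r$ against a non-vanishing angular density, matches the paper's observation that $|\d_j f_k|^2\neq 0$ is homogeneous of degree $-2$ and hence not integrable near $0$. No gaps.
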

\begin{proof} It suffices to argue for $f_1, f_2$ as functions (or distributions) on $\R^2$, since they are constant with respect to the variables $t$ and $z$. We observe that both, $f_1$ and $f_2$, are homogeneous of order $0$, hence the first-order derivatives $\d_j f_k$ ($1 \leq j, k \leq 2$) are homogeneous distributions of order $-1$. Moreover, due to smoothness on $\R^2 \setminus \{0\}$, the derivatives there are given by the classical pointwise formulae, which define locally integrable functions (due to the homogeneity degree $-1$ in two dimensions, or as can be seen from direct inspection). By unique extension of a homogeneous distribution on $\R^n \setminus \{0\}$ to $\R^n$, if the degree of homogeneity is not an integer $\leq -n$ (cf.\ \cite[Theorem 3.2.3]{Hoermander:V1}), we obtain that the distributions $\d_j f_k$ ($1 \leq j, k \leq 2$) are given by these formulae as functions in $L^1_\loc$. On the other hand, since each $|\d_j f_k|^2 \neq 0$ and is homogeneous of order $-2$ on $\R^2$,  $\d_j f_k$ cannot be square-integrable over compact neighborhoods of $0$.
\end{proof}

As a consequence of $g$ being not gt-regular it is natural to embed the metric into Colombeau-generalized tensor fields and analyze the conical spacetime model as one with a generalized Lorentzian metric in the sense of \cite[Subsection 3.2.5]{GKOS:01} employing the concept of the so-called special Colombeau algebras on manifolds. By abuse of notation, we think of $g_\al$ as being represented by a net $(g_\al^\eps)_{\eps\in\,]0,1]}$ of covariant two-tensor fields (with components obtained by mollification---in fact, it suffices to regularize $f_1$ and $f_2$ in this way), such that on any relatively compact subset $g_\al^\eps$ is a smooth Lorentz metric when $\eps$ is sufficiently small. In \cite{VW:00} Vickers and Wilson have shown that the conical spacetime may be called \emph{globally hyperbolic} in the sense of \cite{Clarke:98}, namely in terms of well-posedness for the wave equation. It is our aim here to add a few observations and aspects of \emph{generalized global hyperbolicity} to their fundamental result.

\section{$\Box$-global hyperbolicity}

Clarke advocated the view that a singularity of a non-smooth spacetime, i.e., with $g$ below regularity $C^{1,1}$, should be manifest as an obstruction to the Cauchy development of the physical fields in that spacetime  -- contrary to the standard notion as an ``obstruction to extending geodesics'' (cf.\ \cite{Clarke:98}).
A spacetime $(M,g)$ is therefore called \emph{$\Box$-globally hyperbolic}, if there exists a $C^1$ spacelike hypersurface $S \subset M$ with unit future pointing (continuous) normal vector field $\xi$ such that $M \setminus S$ is the disjoint union of two open (connected) subsets $M^+$ and $M^-$, $\xi$ points from $M^-$ to $M^+$, and the Cauchy problem  
$$
    \Box_g\, u = f, \quad u\!\mid_S = v,\quad \xi u\!\mid_S  = w,
$$
is well-posed in appropriate Sobolev spaces. In case of a merely $L^\infty$ Lorentz metric, initial data $(v,w) \in H^1(S)\times L^2(S)$ and $f = 0$, say, the distributional solution $u$ could not be expected to be better than $H^1$ (at least in ``spatial directions'') and hence $\Box_g u$ does not even make sense due to ill-defined distributional products involving derivatives of metric components with derivatives of $u$. 

For the conical spacetime metric  Vickers and Wilson in \cite{VW:00} interpreted the model in the framework of Colombeau-generalized functions and, based on delicate energy estimates, proved well-posedness of the generalized Cauchy problem with data on $S$ given by $t = 0$, thus showing that the cosmic string spacetime is \emph{generalized $\Box$-globally hyperbolic}. Moreover, they succeed in showing that in case of embedded Cauchy data from $(v,w) \in H^1(S)\times L^2(S)$ the Colombeau solution to the generalized Cauchy problem possesses a distributional shadow in $\D'(M)$ and they were able to identify its analytic and geometric structure to quite some detail.

A further remarkable fact about \cite{VW:00} is that the well-posedness result of Vickers and Wilson for the conical spacetime is still not covered by successive theoretical development on the Cauchy problem for wave equations in non-smooth spacetimes as in \cites{GMS:09,HKS:12}.

\section{Globally hyperbolic metric splitting}

Let us take here as a basis for the notion of a smooth \emph{globally hyperbolic} spacetime the variant defined in \cite[Section 6.6]{HE:73} and \cite[Definition 5.24]{Penrose:72}, namely, the requirements of strong causality and compactness of all intersections of causal futures and causal pasts for arbitrary pairs of points in the manifold.
Among several equivalent ways to characterize global hyperbolicity of smooth spacetimes that have been established, Bernal and S\'{a}nchez in \cite{BS:05} gave the most detailed ``normal form description'' in terms of the following global metric splitting statement. 

\begin{theorem} $(M,g)$ is globally hyperbolic if and only if it is isometric with a spacetime $(\R\times S, \la)$ such that: 

\noindent (a) each $\{t\} \times S$ is a (smooth spacelike) Cauchy hypersurface,

\noindent (b) $\la(t,x) = - \theta(t,x)\, dt^2 + \rho_t(x)$ with a smoothly parametrized family $(\rho_t)_{t\in\R}$  of  Riemann metrics on $S$, and

\noindent (c) $\theta \in C^\infty(\R\times S)$ and positive.

\end{theorem}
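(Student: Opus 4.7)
The plan is to prove the two directions separately, with the forward direction requiring the bulk of the work and essentially reducing to the main technical achievement of Bernal and S\'{a}nchez.

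For the easy direction, assume $(M,g)$ is isometric to $(\R \times S, \la)$ with the listed properties. Then $\{0\}\times S$ is a smooth spacelike Cauchy hypersurface, and Geroch's classical result that any spacetime admitting a (topological) Cauchy hypersurface is globally hyperbolic gives strong causality together with compactness of the causal diamonds $J^+(p)\cap J^-(q)$.

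For the nontrivial direction I would proceed in three stages. First, construct a \emph{Cauchy temporal function}, that is, a smooth function $\tau\col M \map \R$ whose gradient is everywhere past-pointing timelike and whose level sets are smooth spacelike Cauchy hypersurfaces. Geroch's original construction based on past and future volumes produces only a continuous time function with merely topological level sets. The task here is to smooth that construction without destroying the Cauchy property, which Bernal and S\'{a}nchez accomplish by an averaging procedure: local temporal functions are built from carefully chosen volume forms integrated over causal diamonds, and are then combined via a partition of unity with weights engineered so that the resulting sum still has timelike gradient. Second, once $\tau$ is available, set $S := \tau^{-1}(0)$ and introduce the vector field $X := -\nabla\tau / g(\nabla\tau,\nabla\tau)$, which is future-directed timelike and satisfies $X(\tau) = 1$. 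Its flow $\Psi_t$ is complete, because every integral curve crosses each Cauchy hypersurface $\tau^{-1}(t)$ exactly once and is parametrised by $\tau$, so the map $\Psi\col \R \times S \map M$, $(t,x) \mapsto \Psi_t(x)$, is a diffeomorphism that sends each $\{t\} \times S$ to the Cauchy hypersurface $\tau^{-1}(t)$. This gives property (a).

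Third, pull back $g$ along $\Psi$. Since $\Psi_* \partial_t = X$ is proportional to $\nabla\tau$, and the coordinate vector fields $\partial_{x^i}$ are tangent to level sets of $\tau$ (hence satisfy $d\tau(\partial_{x^i}) = g(\nabla\tau,\partial_{x^i}) = 0$), the mixed terms $g(X,\partial_{x^i})$ vanish. Therefore the pullback metric has the form $-\theta(t,x)\,dt^2 + \rho_t(x)$ with $\theta := -g(X,X) = -1/g(\nabla\tau,\nabla\tau)$ smooth and positive (the gradient of $\tau$ being nonzero and timelike), and with $\rho_t$ a smoothly parametrised family of Riemannian metrics on $S$ since each $\tau^{-1}(t)$ is spacelike. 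This yields (b) and (c).

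The decisive obstacle is stage one: smoothing Geroch's volume-based time function into a Cauchy temporal function whose smooth level sets remain Cauchy hypersurfaces is the genuine technical heart of the theorem and is precisely the content of the preparatory papers of Bernal and S\'{a}nchez. Given such a $\tau$, the diffeomorphism construction and the orthogonal-splitting computation in stages two and three are fairly standard consequences.
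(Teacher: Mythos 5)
The paper does not actually prove this theorem: it is quoted as the ``normal form description'' of global hyperbolicity due to Bernal and S\'{a}nchez, with the proof delegated entirely to their paper. So there is no in-paper argument to measure your proposal against. That said, your outline is a faithful reconstruction of the architecture of the cited proof: the converse direction via Geroch's theorem that existence of a Cauchy hypersurface implies global hyperbolicity; the forward direction via (i) existence of a smooth Cauchy temporal function $\tau$, (ii) the complete flow of the normalized gradient field $X=-\nabla\tau/g(\nabla\tau,\nabla\tau)$ furnishing the diffeomorphism $\R\times S\to M$ that carries $\{t\}\times S$ to $\tau^{-1}(t)$, and (iii) the orthogonality $g(X,\partial_{x^i})=0$ forcing the block form $-\theta\,dt^2+\rho_t$ with $\theta=-1/g(\nabla\tau,\nabla\tau)>0$. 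These latter computations are correct as you state them. You also correctly locate the one genuinely hard step --- upgrading Geroch's merely continuous volume-function time to a smooth temporal function whose level sets are still spacelike Cauchy hypersurfaces --- and you explicitly defer it to Bernal--S\'{a}nchez rather than proving it. As long as that step remains a citation, your text is an accurate road map rather than a self-contained proof; this is exactly the status the theorem has in the paper itself, so nothing is lost relative to the source, but be aware that the substance of the theorem lives in the step you have black-boxed.
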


We  show  that the conical spacetime satisfies non-smooth versions of this kind of metric splitting in both senses, one directly with $L^\infty$ Riemann metrics $\rho_t$ in (b) as above, and also in the sense of Colombeau-generalized spacetimes. 

\subsubsection*{Metric splitting for the conical spacetime in the $L^\infty$-setting} 
The natural guess for an $L^\infty$-type metric splitting in case of the conical spacetime $M = \R \times \R^3$ is, of course, to put $S = \R^3$, the isometry being the identity map, and $\la = g_\al = - dt^2 + \rho$ with ($t$-independent) $\rho$ defined by
\begin{multline*}
   \rho(x,y,z) :=  
   \left(\frac{1 + \al^2}{2} + \frac{1 - \al^2}{2} f_1(x,y)\right)\, dx^2 +
   \left(\frac{1 + \al^2}{2} - \frac{1 - \al^2}{2} f_1(x,y)\right)\, dy^2\\ 
   +  (1 - \al^2) f_2(x,y)\, dx\, dy + dz^2.
\end{multline*}
We will check that $\rho$ is an $L^\infty$ Riemann metric on $\R^3$ and that $\{t\} \times \R^3$ is a Cauchy hypersurface\footnote{Note that the condition `spacelike' will in general not make sense, since a hypersurface is a zero set for any Lebesgue measure on the manifold and the $L^\infty$ Lorentz metric might be ``blind'' to this subset.} for $g_\al$ in an appropriate sense\footnote{Namely, defined as usual by requiring that timelike inextendible curves meet the surface exactly once, but with some caution about the notion of \emph{timelike}: With general $L^\infty$ metrics we cannot be sure that the composition of a Lebesgue measurable metric component with a continuous curve tangent is measurable. However, $g_\al$ is Borel measurable, since $f_1$ and $f_2$ are Borel measurable; this simplifies the situation, because then $s \mapsto g_\al(\ga(s))(\dot{\ga}(s),\dot{\ga}(s))$ is Borel measurable for any $C^1$ curve $\ga$.}:

\noindent 1. Boundedness of $\rho$ is clear, hence it suffices to show that $\rho$ is a smooth Riemann metric on $\R^3 \setminus \{(0,0,z) \col z \in \R\}$, which is clear from the following estimate for any tangent vector $(v_1,v_2,v_3) \in T_{(x,y,z)} S \isom \R^3$
\begin{multline*}
  \rho (v,v) = \frac{1 + \al^2}{2} (v_1^2 + v_2^2) +
     \frac{1 - \al^2}{2} (f_1 v_1^2 + 2 f_2 v_1 v_2 - f_1 v_2^2) + v_3^2 \\
     = v_3^2 +\frac{1 + \al^2}{2} (v_1^2 + v_2^2)  +
       \frac{1 - \al^2}{2(x^2 + y^2)} 
       \underbrace{\big( (x v_1 + y v_2)^2 - (y v_1 - x v_2)^2 \big)}_%
         {\qquad\geq - (y v_1 - x v_2)^2 \geq -(x^2 + y^2)(v_1^2 + v_2^2)}\\
      \geq  v_3^2 + \big(\frac{1 + \al^2}{2} -  \frac{1 - \al^2}{2} \big)(v_1^2 + v_2^2)
      =  \al^2 (v_1^2 + v_2^2) + v_3^2.
\end{multline*}
 
\noindent 2. $\{t\} \times \R^3$ is a Cauchy hypersurface: Let $\ga = (\ga_0,\ga_1)$ be a $C^1$ curve that is timelike, meaning $\dot\ga(s) \neq 0$ for every $s$ and $g_\al(\ga(s))(\dot{\ga}(s),\dot{\ga}(s)) < 0$ for almost every $s$, and inextendible;  let $\Norm{\;}$ denote the Euclidean norm on $\R^3$, then we have by the above estimate 
$  0 > g_\al(\dot\ga,\dot\ga) = - \dot\ga_0^{\,2} + \rho(\dot\ga_1,\dot\ga_1) \geq
     - \dot\ga_0^{\,2} + \al^2 \Norm{\dot\ga_1}^2$ (almost everywhere),
hence $|\ga_0(s) - \ga_0(s_0)| >  \al |\int_{s_0}^s \Norm{\dot\ga_1(\tau)}\, d\tau |$ if $s \neq s_0$. If $\ga_0$ was bounded above or below then the Euclidean length of $\ga_1$ were bounded, hence $\ga_0$ (by monotonicity) as well as $\ga_1$ would have endpoints, i.e., $\ga$ could not be inextendible. If $\ga_0$ is unbounded above and below, then the image of $\ga_0$ is $\R$, hence the curve $\ga$ must hit $\{t\} \times \R^3$ at least once, and by strict monotonicity of $\ga_0$ at most once.

\subsubsection*{Metric splitting in the Colombeau-setting}
Considering now the conical spacetime as a generalized Lorentz manifold, we show that it also qualifies for an appropriate variant of the metric splitting. The corresponding generalization of this notion has been introduced in \cite{HKS:12} and been shown to provide a sufficient condition for well-posedness of the Cauchy problem for the wave equation; \cite{HS:14} discusses a slightly extended definition of the metric splitting, in order to allow for more general isometries that were needed to cover models of discontinuous wave-type spacetimes. 

\begin{definition}\label{G_metric_splitting}  Let $g$ be a generalized Lorentz metric on the smooth $(n + 1)$-dimensional manifold M. We say that $(M, g)$ allows a
\emph{globally hyperbolic metric splitting} if there exists a generalized diffeomorphism $\Phi \col M \to \R\times S$, where $S$ is an $n$-dimensional smooth manifold such that the following holds for the pushed forward generalized Lorentz
metric $\lambda := \Phi_*g$ on $\R\times S$:
\begin{trivlist}

\item[(a)] There is a representative $(\lambda_\eps)_\eps$ of $\lambda$ such that every $\lambda_\eps$ is a Lorentz metric and each
slice $\{t_0\}\times S$ with arbitrary $t_0\in\R$ is a (smooth, spacelike) Cauchy hypersurface for every $\lambda_\eps$.

\item[(b)]  We have the metric splitting of $\lambda$ in the form
$$
     \lambda=-\theta dt^2 + H,
$$
where $H\in\Gamma_\G(\mathrm{pr}_2^*(T_2^0 S))$ is a $t$-dependent generalized Riemannian metric and $\theta\in\G(\R\times
S)$ is globally bounded and locally uniformly positive, i.e., for some (hence any) representative $(\theta_\eps)_\eps$ of
$\theta$ and for every relatively compact subset $K \subset S$ we can find a constant $C > 0$ such that $\theta_\eps(t,x) \geq C$ holds
for small $\eps>0$ and $x \in K$.
  
\item[(c)]  For every $T > 0$ there exists a representative $(H_\eps)_\eps$ of $H$ and a smooth complete Riemannian metric $h$ on $S$ which uniformly bounds $H$ from below in the following sense: for all $t\in [-T, T]$, $x\in
S$, $v \in T_x S$, and $\eps\in \, ]0,1]$
$$
   (H_\eps)_t(v,v)\geq h(v,v).
$$
\end{trivlist} 
\end{definition}

\begin{remark} In fact, condition (a) follows from (b) and (c), though it might be more instructive for the notion itself to have it still explicitly in the list of basic properties. To show the redundancy of (a) we argue as follows to prove that $\{t_0\}\times S$ is a Cauchy hypersurface for every $\lambda_\eps$: Timelike curves $\ga = (\ga_0,\ga_1)$ with respect to $\lambda_\eps$ are characterized by 
$0 > (\la_\eps)_{(\ga(s))}(\dot\ga(s),\dot\ga(s)) = 
     - \theta_\eps(\ga(s)) \, \dot\ga_0(s)^2 + 
         (H_\eps)_{\ga_0(s)}(\dot\ga_1(s),\dot\ga_1(s))$,
which implies that $s \mapsto \ga_0(s)$ is strictly monotonic, since ${\dot\ga_0}^{\,2} > H_\eps(\dot\ga_1,\dot\ga_1) / \theta_\eps(\ga) > 0$; if $\ga$ is supposed to be timelike inextendible, then either $\ga_0$ is unbounded (above and below) or $\ga_1$ has no endpoints; if $\ga_0$ were bounded above or below, then (by monotonicity) $\dot\ga_0 \to 0$ towards an interval boundary of the curve, hence $H_\eps(\dot\ga_1,\dot\ga_1) \to 0$ and therefore $\dot\ga_1(s) = 0$ for every $s$ near the interval boundary by (c), which implies that $\ga_1$ has an endpoint; thus, $\ga_0$ is unbounded above and below, i.e., there exists $s_0$ in the curve interval such that $\ga_0(s_0) = t_0$, so that $\ga(s) \in \{t_0\}\times S$. By strict monotonicity of $\ga_0$, the curve hits $\{t_0\}\times S$ at most once.
\end{remark}

For the generalized function version of the conical spacetime we try with a splitting of the metric representative in the form $g_\al^\eps = - dt^2 + \rho^\eps$ with 
\begin{multline*}
   \rho^\eps(x,y,z) :=  
   \left(\frac{1 + \al^2}{2} + \frac{1 - \al^2}{2} f_1^\eps(x,y)\right)\, dx^2 +
   \left(\frac{1 + \al^2}{2} - \frac{1 - \al^2}{2} f_1^\eps(x,y)\right)\, dy^2\\ 
   +  (1 - \al^2) f_2^\eps(x,y)\, dx\, dy + dz^2.
\end{multline*}
Clearly, (b) is satisfied with $S = \R^3$, $\Phi$ the identity map, and $\theta = 1$. 

We will identify conditions on the regularization and/or on $\al$ for the following to hold: $\exists \be > 0$, $\forall (x,y,z) \in \R^3$, $\forall v = (v_1,v_2,v_3) \in T_{(x,y,z)} S \isom \R^3$
\begin{equation}\label{lowerbound}
  \rho^\eps(x,y,z)(v,v) \geq \be (v_1^2 + v_2^2) + v_3^2 \qquad
     0 < \eps \leq 1.
\end{equation} 
Then condition (c) will hold as well. Thus, the explicit globally hyperbolic metric splitting of the conical spacetime will be established, once \eqref{lowerbound} is guaranteed. 

We observe that $ \rho^\eps(x,y,z)(v,v) =$
\vspace{-1\baselineskip}
\begin{multline*}
  v_3^2 + \frac{1 + \al^2}{2}(v_1^2 + v_2^2) +
     \frac{1 - \al^2}{2} (v_1,v_2) \cdot 
     \overbrace{\begin{pmatrix} f_1^\eps(x,y) & f_2^\eps(x,y)\\ 
         f_2^\eps(x,y) & - f_1^\eps(x,y)  \end{pmatrix}}^{:= F_\eps(x,y)}
     \cdot \begin{pmatrix} v_1\\ v_2 \end{pmatrix}\\
     \geq v_3^2 + \frac{1 + \al^2}{2}(v_1^2 + v_2^2) +
     \frac{1 - \al^2}{2} \mu_\eps(x,y) (v_1^2 + v_2^2),
\end{multline*}
where $\mu_\eps(x,y)$ is the smaller eigenvalue of $F_\eps(x,y)$. Since $\det(F_\eps(x,y) - \mu I_2) = \mu^2 - f_1^\eps(x,y)^2 - f_2^\eps(x,y)^2$, we have $\mu_\eps(x,y) = - |f_1^\eps(x,y) + i f_2^\eps(x,y)|$. Hence, our final task is to provide a good upper bound for $|f_1^\eps + i f_2^\eps|$. Note that $f_1^\eps + i f_2^\eps$ is obtained from $f_1 + i f_2$ by convolution regularization with a $\delta$-net in the form $\phi_\eps(x,y) := \phi(x/\eps,y/\eps)/\eps^2$, where $\phi \in \S(\R^2)$ is real-valued (to ensure real components of $g_\al^\eps$) with $\int \phi = 1$; hence
$\norm{f_1^\eps + i f_2^\eps}{L^\infty} \leq \norm{f_1 + i f_2}{L^\infty} 
     \norm{\phi_\eps}{L^1} = 1 \cdot \norm{\phi}{L^1}$ and  
     $\mu_\eps(x,y) \geq -\norm{\phi}{L^1}$, therefore 
$\rho^\eps(x,y,z)(v,v) \geq v_3^2 + 
    \frac{1 - \norm{\phi}{L^1} + \al^2(1 + \norm{\phi}{L^1})}{2} (v_1^2 + v_2^2)$,
i.e., \eqref{lowerbound} holds, if $2 \be := 1 - \norm{\phi}{L^1} + \al^2(1 + \norm{\phi}{L^1})$ is positive, which in turn is equivalent to requiring 
\begin{equation}\label{alphaphi}
  \al^2 > \frac{\norm{\phi}{L^1} - 1}{\norm{\phi}{L^1} + 1} =: c_\phi.
\end{equation}
Note that $0 \leq c_\phi < 1$, since $\norm{\phi}{L^1} \geq 1$. We have several regularization policies to guarantee \eqref{alphaphi}, and thereby also \eqref{lowerbound}:

\begin{trivlist}
\item[Variant A:] Choosing $\phi$ nonnegative we obtain $c_\phi = 0$, hence \eqref{lowerbound} holds with $\be = \al^2$. But then we cannot require the higher moment conditions $\d_1^l \d_2^k \FT{\phi} (0) = 0$ ($l,k \geq 1$), thus we drop the strict equality of smooth functions with their (classes of) $\phi$-convolved regularizations within the algebra of generalized functions.

\item[Variant B:] We choose $\phi$ in the standard way, i.e., including the higher moment conditions, but obtain the above kind of metric splitting (with $\be = \frac{\al^2 - c_\phi}{2(\norm{\phi}{L^1} + 1)}$) only for restricted values of $\al$ in the interval $]\sqrt{c_\phi},1]$. Or, in partial reverse, if $\al > 0$ is given, then we may choose a mollifier $\phi$ (including the higher moment conditions) with $\norm{\phi}{L^1} - 1$ so small that $c_\phi < \al^2$ holds (the possibility of this follows from more general results concerning constructions of one-dimensional mollifiers with prescribed moments in \cite{Spreitzer:14} and taking tensor products to obtain higher dimensional variants). 

\item[Variant C:] Instead of using a model delta regularization based on scaling a single mollifier $\phi$ we could follow the smoothing method introduced in \cite[Lemma 4.3]{SV:09} by employing a strict delta net $(\psi_\eps)$ with $\norm{\psi_\eps}{L^1} \to 1$ (as $\eps \to 0$) and putting $f_j^\eps = f_j * \psi_\eps$; then the inequality analogous to \eqref{alphaphi} (with $\psi_\eps$ in place of $\phi$) will be valid for any $\al$, if $\eps$ is sufficiently small.

\end{trivlist}

\begin{remark} On first impression one might expect that a discussion of condition \eqref{alphaphi} would be avoidable by employing the full Colombeau algebra instead of the special version used here. Then (at least with the algebra on $\R^4$) essentially the mollifier sets $A_0, A_1, \ldots$, where those in $A_p$ satisfy moment conditions up to order $p$, take over the role of the parameter $\eps$; however, as far as we could see, we would still run into an estimation argument very similar to the above and end up with the inequality in \eqref{alphaphi}, unless only nonnegative mollifiers from $A_p$ will be considered. 
\end{remark}

\section{Cauchy hypersurface in a generalized sense}

The conical spacetime appears in \cite{ES:77} as an example showing a  \emph{quasi-regular singularity} along the two-dimensional submanifold where $(x,y) = (0,0)$, which cannot be detected by unboundedness of curvature components (measured in a parallely propagated frame) near the singularity; the singularity was identified as one of a topological nature, so that the spacetime cannot be extended into that region in a globally consistent way. Thus Ellis and Schmidt concluded that ``there can be no global Cauchy surfaces in these space-times''. James Vickers later discussed the extendability of the spacetime in a distribution theoretic setting and in a wider geometric context  in \cite{Vickers:87,Vickers:90}, prior to moving on to a full analysis in the framework of generalized functions starting with \cite{CVW:96} and with fundamental results on the Cauchy problem for the wave equation in \cite{VW:00}, which implicitly indicate that it should be possible to consider the initial value hypersurface $\{ t = 0\}$ a Cauchy hypersurface in an appropriately generalized sense. In fact, this was one of Clarke's major points of motivation in establishing $\Box$-global hyperbolicity in \cite{Clarke:98}: Several models with non-smooth metrics are known in general relativity that violate the cosmic censorship hypothesis -- hence are not globally hyperbolic -- due to necessary deletion of points of non-smoothness from the universe in order to save the usual geometric methods (based on geodesics etc), despite the fact that one can achieve global well-posedness of the Cauchy problems for the physical fields in a weaker sense; still, the initial value surface then cannot be a Cauchy hypersurface in the classical sense, since Geroch proved in \cite{Geroch:70} that  global hyperbolicity of a spacetime $(M,g)$ is equivalent to the existence of a Cauchy hypersurface $S$.

A standard definition of a \emph{Cauchy hypersurface} $S \subset M$ requires that  every inextendible  timelike curve intersects $S$ exactly once (\cite[Chapter 14, Definition 28]{ONeill:83}).  In the context of differential geometry with generalized objects, such as discontinuous vector and tensor fields, a difficulty arises as how to define the notion of an \emph{inextendible generalized curve}, because therein the required non-existence of limits toward the interval boundaries seems to have no natural generalization. However, there might be an alternative approach by remembering that for a $C^1$ curve defined as map on an open interval $I$ into the manifold $M$ one can resort to an arclength parametrization with respect to a complete Riemannian metric, more precisely:  Choose  a complete Riemannian metric $\eta$ on $M$ and let $\ga \col I \to M$ be $C^1$. Pick $t_0 \in I$, then the arclength parameter $s(t) := \int_{t_0}^t \sqrt{\eta(\ga(\tau))(\dot\ga(\tau),\dot\ga(\tau))} \, d\tau$ (giving negative values, if $t < t_0$) defines a continuous nondecreasing map $s \col I \to \R$ such that 
\beq\label{inextendible}
  \ga \text{ is inextendible if and only if } s(I) = \R.
\eeq
\begin{proof} If $s(I) \not= \R$, then $s$ is bounded above or below near one of the interval boundaries of $I$ and, by monotonicity, $s(t_n)$ has a limit as $t_n \in I$ approaches that boundary; in particular, $(s(t_n))$ is a Cauchy sequence in $\R$. Let $d$ denote the distance function on $M$ induced by the complete Riemann metric $\eta$, then $d(\ga(t_n),\ga(t_m)) \leq |s(t_n) - s(t_m)|$ (length of the corresponding curve segment), showing that $(\ga(t_n))$ is a Cauchy sequence in the metric space $(M,d)$. Hence $\ga$ is extendible. 
On the other hand, if $\ga$ is extendible, say the limit of $\ga(t)$ exists as $t$ approaches the right boundary point $r$, then $s_{[t_0,r[}$ has finite length and $s$ is bounded above, hence $s(I) \not= \R$. \end{proof}

Thus, among all regular $C^1$ curves, i.e., $C^1$ maps $\la\col I \to M$ such that $\dot{\la}(t) \neq 0$ for all $t \in I$, precisely the inextendible curves possess a re\-para\-metrization in the form $\ga \col \R \to M$ with $\eta(\ga(s))(\dot\ga(s),\dot\ga(s)) = 1$ for all $s \in \R$. We will pick this as a starting point for our first approach to the notion of inextendibility in the generalized sense, but we refrain from turning this into an official definition, because its  dependence on the allowed class of generalized reparametrizations is troublesome. 

As for the notion of a \emph{timelike generalized curve} with respect to a generalized Lorentz metric, the natural definition clearly would be be to require strict negativity of $g(\dot{\ga},\dot{\ga})$ as generalized function on an interval. However, for technical reasons in the proof following below we strengthen this by a uniformity requirement.  As above, we do see this only as a preliminary concept (because it is parametrization dependent) not justifying a definition of such notion yet. 

More systematic and deeper attempts to investigate appropriate notions of a generalized Cauchy hypersurface for generalized spacetimes will remain a future task. As a preliminary observation we may present the following statement, in which $e$ could be replaced by any  complete Riemannian metric, the condition of normalized tangent vectors together with requiring that the generalized map defining the curve has domain $\R$ mimics the inextendibility condition, and a uniform strict negativity property implements a strong variant of the property ``timelike''.

\begin{proposition} Let $(\R^4,g_\al)$ be the generalized conical spacetime, $S = \{0\} \times \R^3$, and $e$ denote the Euclidean metric on $\R^4$. Suppose the generalized curve $\ga \in \G[\R,\R^4]$ is parametrized such that $e(\dot\ga,\dot\ga) = 1$ holds in $\G(\R)$  and $g_\al(\dot\ga,\dot\ga)$ is \emph{uniformly strictly negative}, i.e., for any representatives $(\ga^\eps)_{\eps \in\,]0,1]} \in \ga$ and $(g_\al^\eps)_{\eps \in\,]0,1]} \in g_\al$ there exist $q \in [0,\infty[$ and $\eps_0 \in \,]0,1]$ such that
$$
   g_\al^\eps(\ga_\eps(s))(\dot\ga_\eps(s),\dot\ga_\eps(s)) \leq -\eps^q 
      \qquad  (s \in \R, 0 < \eps < \eps_0).
$$
Then $\ga$ meets $S$ in a unique (compactly supported) generalized point value, i.e., there is a unique $\tilde{s} \in (\tilde{\R})_\text{c}$ such that $\ga(\tilde{s}) \in \tilde{S}_\text{c}$. 
\end{proposition}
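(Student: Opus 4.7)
The plan is to push the strictly-monotonic-in-$t$ argument used in part 2 of the $L^\infty$ Cauchy-hypersurface discussion in Section~3 (and reused in the Remark following Definition~\ref{G_metric_splitting}) down to the level of representatives, and then to assemble the resulting zeros of $\ga_0^\eps$ into a well-defined bounded generalized point. Fix representatives $(\ga^\eps) \in \ga$ and $(g_\al^\eps) \in g_\al$ taking $g_\al^\eps = -dt^2 + \rho^\eps$ as in Section~3, and decompose $\ga^\eps = (\ga_0^\eps,\vec\ga_1^\eps)$ along $\R \times \R^3$. The essential input from Section~3 is the $\eps$-uniform bound $\rho^\eps(v,v) \geq \be(v_1^2+v_2^2)+v_3^2 \geq 0$.

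The key observation is that combining this with the uniform strict negativity hypothesis yields, for all $s \in \R$ and all $0 < \eps < \eps_0$,
\[
  (\dot\ga_0^\eps(s))^2 = \rho^\eps(\dot\vec\ga_1^\eps(s),\dot\vec\ga_1^\eps(s)) - g_\al^\eps(\ga^\eps(s))(\dot\ga^\eps(s),\dot\ga^\eps(s)) \geq \eps^q.
\]
So $\dot\ga_0^\eps$ has no zero on $\R$ and, by continuity, has constant sign; hence $\ga_0^\eps\col\R \to \R$ is strictly monotonic with $|\dot\ga_0^\eps| \geq \eps^{q/2}$ everywhere, so it is a bijection of $\R$ onto $\R$. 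This produces a unique $s_\eps \in \R$ with $\ga_0^\eps(s_\eps) = 0$.

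Setting $\tilde s := [(s_\eps)]$, three things still need to be checked. (a) Representative independence: if $(\tilde\ga^\eps) = (\ga^\eps + n^\eps) \in \ga$ produces $\tilde s_\eps$, the identity $\ga_0^\eps(\tilde s_\eps) = -n_0^\eps(\tilde s_\eps)$ combined with the mean-value bound $|\ga_0^\eps(s_\eps) - \ga_0^\eps(\tilde s_\eps)| \geq \eps^{q/2}|s_\eps - \tilde s_\eps|$ shows $s_\eps - \tilde s_\eps$ is negligible, provided negligibility of $n_0^\eps$ applies at the points $s_\eps,\tilde s_\eps$. (b) Membership $\tilde s \in (\tilde\R)_\text{c}$: here one exploits c-boundedness of $\ga \in \G[\R,\R^4]$, giving $|\ga_0^\eps(0)| \leq C$, together with the sharpening $(\dot\ga_0^\eps)^2 \geq \be|\dot\vec\ga_1^\eps|^2 + \eps^q$ fed into the unit-speed condition $(\dot\ga_0^\eps)^2 + |\dot\vec\ga_1^\eps|^2 = 1$ modulo negligibles, which upgrades to $(\dot\ga_0^\eps)^2 \geq \be/(1+\be)$ on compacts and so refines the moderate estimate $|s_\eps| \leq C\eps^{-q/2}$ to the $\eps$-independent $|s_\eps| \leq C\sqrt{(1+\be)/\be}$. (c) Uniqueness of $\tilde s$ is immediate from strict monotonicity of each $\ga_0^\eps$. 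The conclusion $\ga(\tilde s) \in \tilde S_\text{c}$ then follows since $\ga^\eps(s_\eps)$ has vanishing first coordinate and, by the length bound from $e$-unit speed together with c-boundedness, remains in a fixed compact subset of $\R^4$.

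The main obstacle is the mismatch between the \emph{global} (on $\R$) lower bound $|\dot\ga_0^\eps| \geq \eps^{q/2}$ coming from uniform strict negativity, and the \emph{compact-set only} unit-speed normalization $e(\dot\ga,\dot\ga) = 1$ in $\G(\R)$: the former is too weak to yield boundedness of $s_\eps$ on its own, while the latter provides no a priori control on intervals of length $\sim \eps^{-q/2}$. Reconciling the two, as sketched in step~(b) via the metric-splitting lower bound for $\rho^\eps$ coupled with c-boundedness of $\ga$, is the technically delicate heart of the argument; everything else is classical monotonicity combined with bookkeeping of negligible error terms.
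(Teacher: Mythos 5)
Your proposal is correct and follows essentially the same route as the paper's proof: the metric-splitting bound \eqref{lowerbound} plus uniform strict negativity gives $(\dot\ga_0^\eps)^2\geq\be\|\dot\ga_1^\eps\|^2+\eps^q$, which combined with the unit-speed normalization yields a classical positive lower bound on $(\dot\ga_0^\eps)^2$, hence bijectivity of $\ga_0^\eps$, a unique zero $s_\eps$, c-boundedness of $(s_\eps)$ via the bound on $|\ga_0^\eps(0)|$, and uniqueness via the derivative bound (you are in fact more explicit than the paper about the compact-versus-global subtlety hidden in the paper's estimate $(\star\star)$, and additionally check representative independence). The one phrasing to tighten is your step (c): since a competing $\tilde t$ only gives $\ga_0^\eps(t_\eps)$ negligible rather than exactly zero, uniqueness needs the quantitative lower bound on $|\dot\ga_0^\eps|$ (exactly as in your step (a) and in the paper's final paragraph), not bare strict monotonicity.
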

\begin{proof} Let $\ga$ be represented by $\ga_\eps = (\ga_0^\eps, \ga_1^\eps) \col \R \to \R\times \R^3$ and $g_\al$ be represented by $g_\al^\eps$. We will also employ the metric splitting established in Section 3, in particular the estimate \eqref{lowerbound}. The property of $\ga$ being uniformly timelike then implies that we have with some $q \in \R$ and $\eps_0 \in\,]0,1]$:
$$
   - \eps^q \geq g_\al^\eps(\dot\ga_\eps,\dot\ga_\eps) = 
     - (\dot\ga_0^\eps)^2 + \rho^\eps(\dot\ga_1^\eps,\dot\ga_1^\eps) 
     \geq - (\dot\ga_0^\eps)^2 + \be \Norm{\dot\ga_1^\eps}^2
     \quad (0 < \eps < \eps_0), 
$$
where $\be$ here denotes the minimum of $1$ and the constant occurring in \eqref{lowerbound}; in particular, 
\beq\tag{$\star$}
    \Norm{\dot\ga_1^\eps(s)}^2 \leq 
      \frac{1}{\be} (\dot\ga_0^\eps(s)^2 -  \eps^q)  \leq \frac{1}{\be} \dot\ga_0^\eps(s)^2
    \qquad \forall s \in \R, 0 < \eps < \eps_0.
\eeq
The parametrization (``inextendibility'') condition for $\ga$ gives $(\dot\ga_0^\eps)^2 + \Norm{\dot\ga_1^\eps}^2 = 1 + n_\eps$ with negligible $(n_\eps)_{\eps \in \,]0,1]}$, and  combination with ($\star$) implies
$$
   1 + n_\eps(s) \leq \dot\ga_0^\eps(s)^2 + \frac{1}{\be} \dot\ga_0^\eps(s)^2  
   = \frac{\be + 1}{\be}\, \dot\ga_0^\eps(s)^2,
$$
hence $\dot\ga_0^\eps(s)^2 \geq \frac{\be}{\be + 1}(1 + n_\eps(s))  \geq \frac{\be}{\be + 1} \cdot \frac{1}{2}$, if $\eps > 0$ is sufficiently small; in summary, there is some $\eps_1 \in \,]0,1]$ such that
\beq\tag{$\star\star$}
   \dot\ga_0^\eps(s)^2 \geq \frac{\be}{2(\be + 1)}
    \qquad \forall s \in \R, 0 < \eps < \eps_1.
\eeq
This implies that $\ga_0^\eps \col \R \to \R$ is bijective, if $0 < \eps < \eps_1$. Let $s_\eps$ denote the unique point in $\R$ such that $\ga_0^\eps(s_\eps) = 0$ and put $p_\eps := \ga_\eps(s_\eps) = (0,\ga_1^\eps(s_\eps)) \in S$.

To see that $(s_\eps)$ is c-bounded we write 
$$
  s_\eps = (\ga_0^\eps)^{-1}(0) = \int_{\ga_0^\eps(0)}^0  
    \diff{\tau} (\ga_0^\eps)^{-1}(\tau) \, d\tau =
    \int_{\ga_0^\eps(0)}^0   \frac{d\tau}{\dot\ga_0^\eps((\ga_0^\eps)^{-1}(\tau))} 
$$
and note that ($\star\star$) implies  $1/|\dot\ga_0^\eps| \leq \sqrt{2(\be + 1)/\be}$, thus   $|s_\eps| \leq |\ga_0^\eps(0)| \sqrt{2(\be + 1)/\be}$ and we may call on the c-boundedness of $\ga_0$.  Thus $(s_\eps)$ defines a compactly supported generalized point $\tilde{s} \in (\widetilde{\R})_{\text{c}}$ and  $\tilde{p} = \ga(\tilde{s})$ in the sense of generalized point values with $\tilde{p}$ represented by $(p_\eps)$. Since $\ga$ is c-bounded the generalized point value  $\tilde{p} = \ga(\tilde{s})$ belongs to $\widetilde{S}_{\text{c}}$.

To prove uniqueness let us assume that $\tilde{q} = \ga(\tilde{t}) \in \widetilde{S}$ with $\tilde{t} \in (\widetilde{\R})_{\text{c}}$. We have $\ga_0(\tilde{t}) = 0 = \ga_0(\tilde{s})$, hence $0 = \ga_0(\tilde{t}) - \ga_0(\tilde{s}) = \int_{\tilde{s}}^{\tilde{t}} \dot\ga_0(\tau)\, d\tau$ (the integral understood $\eps$-wise on representatives, including the upper and lower limit). By ($\star\star$) we have that $\dot\ga_0$ is strictly nonzero (even bounded away from zero by a classical constant), therefore it is impossible to find any interval of length with non-negligible asymptotics between $s_\eps$ and $t_\eps$, hence  $\tilde{t} = \tilde{s}$ (and $\tilde{q} = \tilde{p}$).
\end{proof}

\begin{example} (i) The  generalized curve $\ga \in \G[\R,\R^4]$ with representative given by $\ga_\eps(s) = (s, \sin(\frac{1}{\eps}),0,0)$ satisfies the hypothesis of the proposition. It meets the ``Cauchy hypersurface'' $S$ in a generalized point with accumulation points forming the subset $\{0\} \times [-1,1]\times \{(0,0)\}$. 

\noindent (ii) The Colombeau-generalized map $\la \in \G(\R,\R^4)$ represented by $\lambda_\eps(s) = (s,\frac{1}{\eps},0,0)$ is not c-bounded, hence not considered a \emph{generalized curve}, though it satisfies the other conditions in the above proposition.  The ``image'' of $\la$ meets $\widetilde{S}$ in a generalized point which is not compactly supported and has no accumulation points of corresponding representative nets.
\end{example}

\section{Causal curves and approximations}

\subsubsection*{What is a generalized causal curve?} A notion of  causal curve in the generalized sense seems delicate for two reasons: First, a condition like $g(\dot\ga,\dot\ga) \leq 0$ in the generalized function sense would not guarantee to have $g_\eps$-causal representatives $\ga_\eps$ of $\ga$; second, the existence of zero divisors in the set of generalized numbers raises the question whether a nonzero, but non free, tangent vector $v$ with $g(v,v) = 0$ should be considered lightlike, spacelike, or neither. 
Even requiring regularizations to consist of causal curves brings us to the practical question of approximation. It could have a variety of meanings, at least  two of which one would hope to be essentially equivalent in fairly general classical situations: Causal curves having image sets within a given open set, or causal curves being uniformly close as mappings from intervals into the manifold. Well-known results on these questions address the sequential convergence in strongly causal Lorentz manifolds or abstract order theoretic constructions for the globally hyperbolic case. We want to use the opportunity here to a provide (or give a detailed review of) a precise statement, in what sense the versions using classes of curves and specific parametrizations are topologically comparable. 

\subsubsection*{Two topologies for continuous causal curves:} A  generalized Lorentz metric satisfying the globally hyperbolic metric splitting according to Definition \ref{G_metric_splitting} possesses representatives consisting of smooth globally hyperbolic Lorentz metrics. Thus, in these regularizations we may take advantage of very good causality properties, in particular, including strong causality etc. As discussed, e.g., in the review article \cite{Sanchez:11} by S\'{a}nchez global hyperbolicity of a smooth spacetime $(M,g)$ has been described in \cite{C-B:68} in terms of compactness with respect to the compact-open topology of the set of causal curves connecting two points and parametrized proportionally to arclength on the interval $[0,1]$, whereas more often the so-called $C^0$-topology on the set of equivalence classes or images of causal curves  is being employed instead (e.g.\ in \cite{Penrose:72, Wald:84, Kriele:99}). We investigate the relation between the corresponding topological spaces of causal curves or classes of causal curves. We may refer to \cite{Saemann:14} (and the references therein) for more details on the analysis of comparability of several related topologies for Lorentzian metrics that are merely \emph{continuous}.

Let $h$ be a complete Riemannian metric with induced metric $d_h$ on $M$ and denote the compact-open topology on $C([0,1],M)$ (and subsets thereof)  by $\tauco$. It is metrizable by the metric  $\rho$ of uniform convergence $\rho(\la,\ga) = \sup \{ d_h(\la(s),\ga(s)) \mid s \in [0,1]\}$. Let $p, q \in M$ satisfy $p < q$ in the sense of the causal relation (as in \cite[Chapter 14]{ONeill:83}); $p \leq q$ means $p < q$ or $p = q$.

Recall that a \emph{continuous} curve $\ga \col [0,1] \to M$ is said to be \emph{causal}, 
if for any  convex open subset $U$ of $M$ and $s_1, s_2 \in [0,1]$, $s_1 \leq s_2$ with $\ga([s_1,s_2]) \subset U$, we have  $\ga(s_1) \leq \ga(s_2)$ (relative $U$). Note that limits of continuous causal curves in the sense of the compact-open topology are continuous causal curves. In fact, let $\ga \in C([0,1],M)$ be the uniform limit of the sequence  of continuous causal curves $\ga_n \col [0,1] \to M$. For large $n$ we have $\ga_n([s_1,s_2]) \subset U$ and clearly $\ga_n(s_j) \to \ga(s_j)$ ($n \to \infty$; $j = 1,2$). Since $\ga_n$ is causal we have $\ga_n(s_1) \leq \ga_n(s_2)$ and the property that the relation $\leq$ is closed on $U$ (cf.\ \cite[Chapter 14, Lemma 2]{ONeill:83}) implies that $\ga(s_1) \leq \ga(s_2)$ relative $U$. Hence $\ga$ is causal.

Continuous causal curves are \emph{Lipschitz continuous} (see, e.g., pages 75-76 in \cite{BEE:96}, or \cite{Kriele:99}, pages 365-366) and therefore rectifiable and  differentiable almost everywhere, in particular, they may be parametrized by arclength. Moreover, if the image sets of all curves in a subset $A$ of $C([0,1],M)$ consisting of continuous causal curves are contained in a fixed compact set in $M$, then we even have a uniform Lipschitz constant $L$ for every curve in $A$ as well as a uniform upper bound for the curve lengths. The uniform Lipschitz constant implies \emph{equicontinuity} of $A$, since for every $\ga \in A$ and $s_1, s_2 \in [0,1]$ we have  $d_h(\ga(s_1),\ga(s_2)) \leq L |s_2 - s_1|$, which is arbitrarily small independently of $\ga$, if the difference $|s_2 - s_1|$ is sufficiently small.

Let $C(p,q)$ be the set of all \emph{classes of continuous causal curves} from $p$ to $q$ modulo continuously differentiable parameter transforms with positive derivative. The following property is  mentioned in \cite[page 206]{Wald:84} (and discussed also in \cite{Saemann:14}):
\beq \tag{$\star$}
  (M,g) \text{ causal},\; \widetilde{\la}, \widetilde{\ga} \in C(p,q):
  \quad \text{(image of) } \widetilde{\la} \subseteq 
       \text{(image of) } \widetilde{\ga} 
       \quad \Longrightarrow \quad 
       \widetilde{\la} = \widetilde{\ga}.
\eeq
Thus, we may identify classes of curves in $C(p,q)$ with their images as subsets of $M$ and it is easy to show that we obtain a basis for a Hausdorff topology $\tau$ on $C(p,q)$ by specifying $O(U) := \{ \widetilde{\la} \in C(p,q) \mid \text{(image of) } \widetilde{\la} \subset U \}$ for any open $U \subseteq M$.

Let  $C_\co(p,q)$  be the $\tauco$-closure of the set of all future directed continuous causal curves $\la \in C([0,1],M)$ such that $\la(0) = p$, $\la(1) = q$, and  $h(\dot\la,\dot\la)$ is constant almost everywhere.

Note that every class of a continuous causal curve $\sig_0$ from $p$ to $q$ possesses a unique parametrization $\sig$ defined on $[0,1]$ and proportional to arclength, i.e., with a constant $l > 0$ such that $h_{\sig(s)}(\dot\sig(s),\dot\sig(s)) = l^2$ for almost every $s \in [0,1]$:  As noted above $\sig_0$ is rectifiable and differentiable a.e., hence possesses a parametrization by arclength; upon rescaling by the length $l$ we obtain the desired parametrization on the interval $[0,1]$. As for uniqueness, suppose $\ga \col [0,1] \to M$ is another such parametrization in the same class. If $\vphi \col [0,1] \to [0,1]$ is $C^1$ with $\vphi' > 0$ and such that $\sig = \ga \circ \vphi$, then $l = h(\dot{\sig},\dot{\sig}) = (\vphi')^2 \cdot h(\dot{\ga}, \dot{\ga}) =  (\vphi')^2 \cdot l$ holds a.e.; hence $\vphi' = 1$ and therefore $\sig = \ga$.

\begin{proposition}  Then the map $\Ga \col (C_\co(p,q),\tauco) \to (C(p,q),\tau)$, $\Gamma(\gamma) := \widetilde{\ga}$, the class of $\ga$ in $C(p,q)$, is surjective, continuous, and proper. In particular, compactness of $(C_\co(p,q),\tauco)$ is implied by that of $(C(p,q),\tau)$ and vice versa.
\end{proposition}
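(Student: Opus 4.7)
The plan is to verify surjectivity, continuity, and properness in turn, and then deduce the compactness equivalence as a formal corollary of these three properties. \emph{Surjectivity} is by construction: every class $\widetilde\sigma \in C(p,q)$ contains the unique $h$-proportional arclength parametrization $\sigma\col [0,1] \to M$ singled out in the paragraph preceding the proposition, and this $\sigma$ already lies in the ``pre-closure'' set whose $\tauco$-closure defines $C_\co(p,q)$. For \emph{continuity} it suffices to check that the preimage of a basic open set $O(U)$ is $\tauco$-open: given $\gamma \in \Gamma^{-1}(O(U))$, the image $\gamma([0,1])$ is compact inside the open set $U$, and completeness of $(M,d_h)$ yields a positive lower bound $\delta$ for the $d_h$-distance from $\gamma([0,1])$ to the closed set $M\setminus U$; then the open $\rho$-ball of radius $\delta$ about $\gamma$ is contained in $\Gamma^{-1}(O(U))$.

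\emph{Properness} is the heart of the argument. Given compact $K\subset (C(p,q),\tau)$ I first want to exhibit a single compact $L\subset M$ containing the image of every curve in $\Gamma^{-1}(K)$. For this I invoke $\sigma$-compactness of $M$: choose a compact exhaustion $M = \bigcup_j K_j$ with $K_j \subset \mathring{K}_{j+1}$ and set $U_j := \mathring{K}_{j+1}$; since every continuous curve from $p$ to $q$ has compact image, the family $\{O(U_j)\}_j$ is an open cover of $(C(p,q),\tau)$, so a finite subcover of $K$ produces an index $j_0$ with all class-images in $K$ contained in $L := K_{j_0+1}$. The uniform upper bound on causal curve lengths in the fixed compact set $L$ (recorded in the discussion preceding the proposition) then bounds the constants $l_\sigma$ appearing in the proportional-arclength parametrizations $\sigma \in \Gamma^{-1}(K)$ by some $L_0$, making each such $\sigma$ an $L_0$-Lipschitz map; this bound is preserved under uniform limits, hence holds throughout $\Gamma^{-1}(K)$. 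Arzelà--Ascoli for maps into the complete metric space $(M,d_h)$ now yields $\tauco$-relative compactness of $\Gamma^{-1}(K)$ inside $C([0,1],M)$, and closedness comes from $C_\co(p,q)$ being $\tauco$-closed by definition, together with $\Gamma^{-1}(K)$ being closed there as the preimage under the continuous map $\Gamma$ of a set closed in the Hausdorff space $(C(p,q),\tau)$.

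The compactness equivalence then follows formally from surjective plus continuous plus proper: since $C_\co(p,q) = \Gamma^{-1}(C(p,q))$, properness gives ``$(C(p,q),\tau)$ compact $\Rightarrow (C_\co(p,q),\tauco)$ compact'', while continuity and surjectivity give $C(p,q) = \Gamma(C_\co(p,q))$ and hence the reverse implication. The main obstacle I expect is the opening move of the properness step, where the abstract $\tau$-compactness of $K$ has to be converted into the concrete geometric statement that all images sit inside one fixed compact subset of $M$; once that reduction is made, everything downstream is a routine combination of Arzelà--Ascoli with the uniform Lipschitz estimate for continuous causal curves in compact sets that was already recorded before the proposition.
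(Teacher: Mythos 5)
Your proposal is correct and follows essentially the same route as the paper: surjectivity from the proportional-arclength parametrization, continuity via the distance from the compact image to $M\setminus U$, and properness by trapping all images of $\Ga^{-1}(K)$ in a fixed relatively compact subset of $M$ and then applying the uniform Lipschitz bound together with Arzel\`a--Ascoli. The only differences are cosmetic: you obtain the trapping compact set from a compact exhaustion of $M$ where the paper covers $K$ by finitely many relatively compact unit-neighborhoods $O((\widetilde{\ga})_1)$, and you are somewhat more explicit than the paper about why $\Ga^{-1}(K)$ is closed (hence actually compact, not merely relatively compact).
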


\begin{proof} The map $\Ga$ is surjective by the observation on existence of parametrizations on $[0,1]$ proportional to arclength made above. (Remark: Dropping the closure in the definition of $C_\co(p,q)$ would yield injectivity of $\Ga$ as well.)

The map $\Ga$ is $\tauco$-$\tau$-continuous:  Let $\ga \in C_\co(p,q)$ and $O(U)$ be an open base neighborhood of $\Ga(\ga)$. Choose $\eps > 0$ sufficiently small such that $(\ga)_\eps := \{ x \in M \mid d_h(x, \text{image of } \ga) < \eps\}$ is contained in $U$, then $\Ga(B_\eps^\rho(\ga)) \subseteq O(U)$.

It remains to prove that preimages of compact subsets in $(C(p,q),\tau)$ are compact in $(C_\co(p,q),\tauco)$. A compact subset $K$ of $C(p,q)$ can be covered by finitely many open basis sets of the form $O((\widetilde{\ga})_1)$ with $\widetilde{\ga}$ belonging to $K$. The $d_h$-bounded subset $(\widetilde{\ga})_1$ is open and relatively compact in $M$, since $h$ is a complete Riemannian metric. 
Hence it suffices to prove the the following

\noindent \emph{Claim:} If $U \subseteq M$ is open and relatively compact, then $\Ga^{-1}(O(U))$ is relatively compact in $C_\co(p,q)$.

\noindent Proof of the Claim: All continuous causal curves $[0,1] \to M$ with images contained in $U$ are Lipschitz continuous with a uniformly bounded Lipschitz constant $L$ and hence  $\Ga^{-1}(O(U))$ is equicontinuous by an observation made earlier. By $d_h$-boundedness of $U$, we also get that $\Ga^{-1}(O(U))$ is pointwise bounded. Thus, the Arzela-Ascoli theorem implies that $\Ga^{-1}(O(U))$ is relatively compact with respect to $\tauco$. 
\end{proof}

 \paragraph{\textbf{Acknowledgement:}} The author thanks an anonymous referee for suggestions to improve a few formulations and aspects in Sections 1-4, and thanks Clemens S\"amann for corrections and helpful remarks on Section 5. Work on this paper has been supported by the Austrian Science Fund project P25326.

\bibliography{gh}

\end{document}